\newcommand{\R}{\mathbb{R}}
\DeclareMathOperator{\spn}{span}
\DeclareMathOperator{\rank}{rank}
\newtheorem{theorem}{Theorem}
\newtheorem{definition}{Definition}
\newtheorem{remark}{Remark}
\newtheorem{lemma}{Lemma}
\begin{document}

\title{
Geometric Stabilization of Virtual Nonlinear Nonholonomic Constraints}

\author{Efstratios Stratoglou, Alexandre Anahory Simoes,  Anthony Bloch, and Leonardo Colombo 
%\thanks{$^{*}$ Authors contributed equally.} 
\thanks{E. Stratoglou (ef.stratoglou@alumnos.upm.es) is with Universidad Polit\'ecnica de Madrid (UPM), 28006 Madrid, Spain.} 
\thanks{A. Anahory Simoes  (alexandre.anahory@ie.edu) is with the School of Science and Technology, IE University, Spain.}
\thanks{A. Bloch (abloch@umich.edu) is with Department of Mathematics, University of Michigan, Ann Arbor, MI 48109, USA.}
\thanks{L. Colombo (leonardo.colombo@car.upm-csic.es) is with Centre for Automation and Robotics (CSIC-UPM), Ctra. M300 Campo Real, Km 0,200, Arganda
del Rey - 28500 Madrid, Spain.}
\thanks{The authors acknowledge financial support from Grant PID2022-137909NB-C21 funded by MCIN/AEI/ 10.13039/501100011033. A.B. was partially supported by NSF grant  DMS-2103026, and AFOSR grants FA
9550-22-1-0215 and FA 9550-23-1-0400.}}%

\maketitle

\begin{abstract}
    In this paper, we address the problem of stabilizing a system around a desired manifold determined by virtual nonlinear nonholonomic constraints. Virtual constraints are relationships imposed on a control system that are rendered invariant through feedback control. Virtual nonholonomic constraints represent a specific class of virtual constraints that depend on the system’s velocities in addition to its configurations. We derive a control law under which a mechanical control system achieves exponential convergence to the virtual constraint submanifold, and rendering it control-invariant. The proposed controller’s performance is validated through simulation results in two distinct applications: flocking motion in multi-agent systems and the control of an unmanned surface vehicle (USV) navigating a stream.

\end{abstract}

%\begin{IEEEkeywords} Geometric control, Virtual constraints, Affine connection control systems, Nonholonomic  systems. \end{IEEEkeywords}

\IEEEpeerreviewmaketitle

%\textcolor{magenta}{Tony: just a thought about the phrasing of the problem: we are not really stabilizing the constraints themselves but the 
%submanifold (subset?) of the phase space defined by the constraints. It is a bit clumsy to say this so I understand why one might not want to keep 
%saying it  but perhaps one should at least the first couple of times?}
\section{Introduction}
Virtual constraints are constraints enforceable through the application of external forces. Over time, virtual holonomic constraints have proven to be a powerful tool for motion control, particularly in the context of bipedal robots \cite{westervelt2018feedback}. In \cite{CanudasdeWit:ontheconcept:2004}, these constraints were applied to design orbitally stable feedback control laws for tasks such as balancing and walking. In subsequent years, this methodology was expanded to support motion planning in more general robotic systems, as discussed in \cite{Consol:Constal:2015} and the references therein.

Virtual nonholonomic constraints represent a specific type of virtual constraints that are defined based on the velocities of the system rather than solely on its configurations. These constraints were first introduced in \cite{griffin2015nonholonomic} as a method for designing velocity-dependent swing foot placements in bipedal robots (see also \cite{hamed2019nonholonomic}, \cite{horn2018hybrid}, \cite{moran2021energy} for further applications).

In \cite{VNNHC}, we introduced the concept of virtual nonlinear nonholonomic constraints within a geometric framework. Specifically, we defined a controlled invariant manifold associated with an affine connection mechanical control system. We established the existence and uniqueness of a control law that enforces a virtual nonlinear nonholonomic constraint and characterized the trajectories of the resulting closed-loop system as solutions of a mechanical system governed by an induced constrained connection. Furthermore, we identified the conditions under which nonholonomic dynamics can emerge from virtual nonholonomic constraints. 

In this paper, we focus on the stabilization of systems around desired manifolds of the phase space, determined by virtual nonlinear nonholonomic constraints. We prove the existence of a control law ensuring that the system adheres to the constraints. Additionally, we show that if the system already satisfies the constraints at a given point, the control law aligns with the unique control law derived in \cite{VNNHC}, which guarantees the existence of a virtual nonlinear nonholonomic constraint.

In our previous paper \cite{stabilization}, we studied the stabilization problem for the case of linear nonholonomic constraints. In this paper, we extend the results to nonlinear constraints and we derive more flexible control laws containing gain matrices able to accommodate different rates of convergence. In particular, these advances allow us to cover more interesting cases in engineering and could be used to tackle more challenging problems in robotic locomotion. The main difficulty with the results achieved in this paper lies on the complex interaction of nonlinear constraints with the dynamics.

The remainder of the paper is organized as follows. In Section \ref{sec2} we introduce virtual nonlinear nonholonomic constraints in an affine connection (coordinate-free) framework. In Section \ref{sec3} we address the problem of stabilizing a system around desired virtual nonlinear nonholonomic constraints. In Section \ref{sec4} the proposed controller’s performance is validated through simulation results in two distinct applications: flocking motion in multi-agent systems and the control of an unmanned surface vehicle (USV) navigating a stream. We conclude the paper in Section \ref{conc} with some directions for the future. 

\section{Virtual Nonholonomic constraints}\label{sec2}
\subsection{Mechanical systems in a coordinate-free setting}

We begin with a Lagrangian system on an $n$-dimensional configuration space $Q$ and Lagrangian $L:TQ\to\mathbb{R}$. We assume the Lagrangian has the mechanical form $L=K-V\circ\pi$, where $K$ is a function on $TQ$, the tangent bundle of $Q$, describing the kinetic energy of the system, that is, $K=\frac{1}{2}\mathcal{G}(v_q, v_q)$, where $\mathcal{G}$ is the Riemannian metric on $Q$, $V:Q\to\mathbb{R}$ is a function on $Q$ representing the potential energy, and $\pi:TQ\to Q$ is the tangent bundle projection, locally given by $\pi(q,\dot{q})=q$ with $(q,\dot{q})$ denoting local coordinates on $TQ$. In addition, we denote by $\mathfrak{X}(Q)$ the set of vector fields on $Q$ and by $\Omega^{1}(Q)$ the set of $1$-forms on $Q$. If $X, Y\in\mathfrak{X}(Q),$ then
$[X,Y]$ denotes the Lie bracket of vector fields.

In the case that $Q$ is a Riemannian manifold, there is a unique connection $\nabla^{\mathcal{G}}:\mathfrak{X}(Q)\times \mathfrak{X}(Q) \rightarrow \mathfrak{X}(Q)$ called the \textit{Levi-Civita connection} satisfying 
\begin{enumerate}
\item $[ X,Y]=\nabla_{X}^{\mathcal{G}}Y-\nabla_{Y}^{\mathcal{G}}X$ (symmetry)
\item $X(\mathcal{G}(Y,Z))=\mathcal{G}(\nabla_{X}^{\mathcal{G}}(Y,Z)+\mathcal{G}(Y,\nabla_{X}^{\mathcal{G}}Z)$ (compatibillity of the metric).
\end{enumerate}

The Levi-Civita connection helps us describe the trajectories of a mechanical Lagrangian system. The trajectories $q:I\rightarrow Q$ of a mechanical Lagrangian determined by $L$ satisfy the following equations
\begin{equation}\label{ELeq}
    \nabla_{\dot{q}}^{\mathcal{G}}\dot{q} + \text{grad}_{\mathcal{G}}V(q(t)) = 0,
\end{equation}
where the vector field $\text{grad}_{\mathcal{G}}V\in\mathfrak{X}(Q)$ is characterized by $\mathcal{G}(\text{grad}_{\mathcal{G}}V, X) = dV(X), \; \mbox{ for  every } X \in
\mathfrak{X}(Q)$. If $V$ vanishes, then the trajectories of the mechanical system are just the geodesics with respect to the connection $\nabla^{\mathcal{G}}$.

\subsection{Nonlinear nonholonomic mechanics }

A nonlinear nonholonomic constraint on a mechanical system is a submanifold $\mathcal{M}$ of the tangent bundle $TQ$ from which the velocity of the system can not leave. Mathematically, the constraint may be written as the set of points where a function of the type $\Phi:TQ \rightarrow \mathbb{R}^{m}$ vanishes, where $m < n=\dim Q$. That is, $\mathcal{M}=\Phi^{-1}(\{0\})$. If every point in $\mathcal{M}$ is regular, i.e., the tangent map $T_{p}\Phi$ is surjective for every $p\in \mathcal{M}$, then $\mathcal{M}$ is a submanifold of $TQ$ with dimension $2n-m$ by the regular level set theorem \cite{FoM}. 

Let $\Phi = (\phi^{1}, \dots, \phi^{m})$ denote the coordinate functions of the constraint $\phi$. The coordinate expression of the equations of motion of a system with nonholonomic constraints are called Chetaev's equations and they are given by
\begin{equation}
    \begin{split}
        & \frac{d}{dt}\left(\frac{\partial L}{\partial \dot{q}}\right)-\frac{\partial L}{\partial q}=\lambda_{a} \frac{\partial \phi^{a}}{\partial \dot{q}}, \,\,\, \phi^{a}(q,\dot{q}) = 0,
    \end{split}
    \label{Chetaev:eq}
\end{equation}
(see \cite{MdLeon}, \cite{bloch2015nonholonomic}). The right hand side acts like a constraint force that forces the system to remain inside the constraint submanifold. From the physical perspective, these forces are characterized by their doing no mechanical work on the system. Here $\lambda\in\mathbb{R}^{m}$ denotes the Lagrange multiplier.

In the following, we will consider a mapping that to each point $v_{q}$ on the submanifold $\mathcal{M}$ assigns a vector subspace of $T_{q}Q$. In differential geometry, such a map resembles a distribution on $Q$ restricted to $\mathcal{M}$, but unlike a distribution it also depends on the velocity. Thus we will call it a velocity-dependent distribution \cite{VNNHC}. From now on, let $S(v_{q})$ be a subspace of $T_{q}Q$, with $v_{q}\in \mathcal{M}$, defined by $$S(v_{q})= \{ X \in T_{q}Q \ | \ \Big{\langle} \frac{\partial \phi^{a}}{\partial \dot{q}^{i}}(v_q) dq^{i}, X \Big{\rangle} = 0, \ a = 1,\ldots, m \}.$$  $S(v_{q})$ act as a linearization of the constraint submanifold $\mathcal{M}$ at each point $v_{q}$. Equations \eqref{Chetaev:eq} can be written in Riemannian form using a geodesic-like equation as follows.

\begin{theorem}[\cite{VNNHC}]
    A curve $q:I\rightarrow Q$ is a solution of Chetaev's equations for a mechanical type Lagrangian with kinetic energy determined by a Riemannian metric $\mathcal{G}$ on $Q$ and a potential function $V$ if and only if $\Phi(q,\dot{q})=0$ and it satisfies the equation
   \begin{equation}\label{Chetaev's eqns}
        \nabla_{\dot{q}}\dot{q} + \text{grad } V \in S(\dot{q})^{\bot},
    \end{equation}where $S(\dot{q})^{\bot}$ is the orthogonal velocity-dependent distribution to $S(\dot{q})$ with respect to the Riemannian metric $\mathcal{G}$, and $\nabla$ is the corresponding Levi-Civita connection.
\end{theorem}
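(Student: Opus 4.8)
The plan is to establish the equivalence by passing between the coordinate Chetaev equations \eqref{Chetaev:eq} and the intrinsic relation \eqref{Chetaev's eqns} through the musical isomorphism associated with $\mathcal{G}$. The two constraint conditions $\phi^a(q,\dot q)=0$ appear identically in both formulations, so the entire content of the statement lies in identifying the dynamical part of the equations. Accordingly, I would fix a curve satisfying the constraints and argue that the remaining equality holds in one formulation if and only if it holds in the other, working pointwise along the curve.

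First I would compute the Euler--Lagrange operator of the mechanical Lagrangian $L=\frac{1}{2}\mathcal{G}_{ij}\dot q^i\dot q^j-V$ in local coordinates and recall the standard identity
\[
\frac{d}{dt}\Big(\frac{\partial L}{\partial \dot q^k}\Big)-\frac{\partial L}{\partial q^k}=\mathcal{G}_{kj}\big(\nabla_{\dot q}\dot q\big)^j+\frac{\partial V}{\partial q^k},
\]
which follows by symmetrizing the velocity-quadratic terms and recognizing the Christoffel symbols of the Levi-Civita connection $\nabla$. Using the characterization $\mathcal{G}(\mathrm{grad}\,V,X)=dV(X)$, the potential term rewrites as $\partial V/\partial q^k=\mathcal{G}_{kj}(\mathrm{grad}\,V)^j$, so the left-hand side of \eqref{Chetaev:eq} is exactly the index-lowering $\flat$ of the vector $\nabla_{\dot q}\dot q+\mathrm{grad}\,V$. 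In other words, the Euler--Lagrange operator of $L$ is the covector $(\nabla_{\dot q}\dot q+\mathrm{grad}\,V)^\flat$.

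Next I would interpret the right-hand side $\lambda_a\,\partial\phi^a/\partial\dot q^k$ geometrically. By the very definition of $S(v_q)$, the $m$ covectors $\frac{\partial\phi^a}{\partial\dot q^i}\,dq^i$ annihilate $S(v_q)$, so their span is precisely the annihilator $S(\dot q)^\circ$. Hence Chetaev's equations assert that $(\nabla_{\dot q}\dot q+\mathrm{grad}\,V)^\flat$ lies in $S(\dot q)^\circ$ for some multipliers $\lambda_a$, and conversely membership in $S(\dot q)^\circ$ guarantees the existence of such $\lambda_a$. Applying the sharp isomorphism $\sharp$ and using the elementary fact that $(S^\circ)^\sharp=S^\perp$ with respect to $\mathcal{G}$ (which follows directly from $\mathcal{G}(\alpha^\sharp,X)=\alpha(X)$), this is equivalent to $\nabla_{\dot q}\dot q+\mathrm{grad}\,V\in S(\dot q)^\perp$, which is \eqref{Chetaev's eqns}.

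The only genuinely computational step is the first identity; the remainder is linear algebra carried pointwise along the curve. I expect the main obstacle to be bookkeeping rather than conceptual: one must keep the velocity dependence of $S(\dot q)$ explicit throughout, and invoke the regularity of $\mathcal{M}$ (surjectivity of $T_p\Phi$) to ensure the covectors $\frac{\partial\phi^a}{\partial\dot q^i}\,dq^i$ are linearly independent, so that $\dim S(\dot q)^\perp=m$ matches the number of multipliers and the equivalence is a genuine bijection, with no solutions lost or spuriously introduced in either direction.
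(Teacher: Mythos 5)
Your argument is correct, and it is the standard route: identify the Euler--Lagrange operator of the mechanical Lagrangian with $\flat_{\mathcal{G}}(\nabla_{\dot q}\dot q+\mathrm{grad}\,V)$, observe that the Chetaev force terms $\lambda_a\frac{\partial\phi^a}{\partial\dot q^i}dq^i$ range exactly over the annihilator $S(\dot q)^{\circ}$, and convert annihilator to orthogonal complement via $\sharp_{\mathcal{G}}$. Note that this paper does not actually prove the theorem --- it imports it from \cite{VNNHC} --- but your proof is essentially the one given there, so there is nothing genuinely divergent to compare.

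One small inaccuracy, which does not damage the proof: surjectivity of $T_p\Phi$ does \emph{not} by itself give linear independence of the covectors $\frac{\partial\phi^a}{\partial\dot q^i}dq^i$, since the full differential involves the $q$-partials as well; independence of the velocity-partials is a separate (Chetaev-type) regularity assumption. Fortunately you do not need it: since $S(\dot q)$ is by definition the common kernel of these covectors, the double-annihilator identity gives $S(\dot q)^{\circ}=\spn\{\frac{\partial\phi^a}{\partial\dot q^i}dq^i\}$ whether or not they are independent, so the existence of multipliers $\lambda_a$ is equivalent to membership in $S(\dot q)^{\circ}$ in all cases. Independence only matters if you additionally want the multipliers to be unique and $\dim S(\dot q)^{\perp}=m$, neither of which the statement requires.
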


\subsection{Virtual nonholonomic constraints}

For a Riemannian metric $\mathcal{G}$ on $Q$, we can use its non-degeneracy property to define the musical isomorphism $\flat_{\mathcal{G}}:\mathfrak{X}(Q)\rightarrow \Omega^{1}(Q)$ defined by $\flat_{\mathcal{G}}(X)(Y)=\mathcal{G}(X,Y)$ for any $X, Y \in \mathfrak{X}(Q)$. Also, denote by $\sharp_{\mathcal{G}}:\Omega^{1}(Q)\rightarrow \mathfrak{X}(Q)$ the inverse musical isomorphism, i.e., $\sharp_{\mathcal{G}}=\flat_{\mathcal{G}}^{-1}$. In local coordinates, $\flat_{\mathcal{G}}(X^{i}\frac{\partial}{\partial q^{i}})=\mathcal{G}_{ij}X^{i}dq^{j}$ and $\sharp_{\mathcal{G}}(\alpha_{i}dq^{i})=\mathcal{G}^{ij}\alpha_{i}\frac{\partial}{\partial q^{j}}$, where $\mathcal{G}^{ij}$ is the inverse matrix of $\mathcal{G}_{ij}$.

Given an external force $F^{0}:TQ\rightarrow T^{*}Q$ and a control force $F:TQ\times U \rightarrow T^{*}Q$ of the form
$\displaystyle{
    F(q,\dot{q},u) = \sum_{a=1}^{m} u_{a}f^{a}(q , \dot{q})}$ where $f^{a}(q,\dot{q})\in T^{*}Q$ with $m<n$, $U\subset\mathbb{R}^{m}$ the set of controls and $u_a\in\mathbb{R}$ with $1\leq a\leq m$ the control inputs, consider the associated mechanical control system
\begin{equation}\label{mechanical:control:system}
    \nabla_{\dot{q}}\dot{q} =Y^0(q,\dot{q})+u_{a}Y^{a}(q, \dot{q}),
\end{equation}
where $Y^0(q,\dot{q})=\sharp_{\mathcal{G}} (F^0(q, \dot{q}))$ and $Y^{a}=\sharp_{\mathcal{G}} (f^{a}(q, \dot{q})).$

Equation \eqref{mechanical:control:system} forms a system of second-order differential equations whose solutions are the trajectories of a vector field of the form \cite{VNNHC}
\begin{equation}\label{SODE}\Gamma(q, \dot{q}, u)=G(q,\dot{q})+u_{a}(Y^{a})_{(q,\dot{q})}^{V}.\end{equation}
We call each $Y^{a}=\sharp_{\mathcal{G}}(f^{a})$ a control force vector field. Here $G$ is the vector field determined by the unactuated forced mechanical system
$\nabla_{\dot{q}}\dot{q} =Y^0(q,\dot{q})$.
    %\nabla^{\mathcal{G}}_{\dot{q}(t)} \dot{q}(t) =Y^{0}(q(t),\dot{q}(t))

\begin{definition}
    The distribution $\mathcal{F}\subseteq TQ$ generated by the vector fields  $Y^{a}=\sharp_{\mathcal{G}}(f^{a})$ is called the \textit{input distribution} associated with the mechanical control system \eqref{mechanical:control:system}.
\end{definition}

%Now we will define the concept of virtual nonholonomic constraint.

\begin{definition}
A \textit{virtual nonholonomic constraint} associated with the mechanical control system \eqref{mechanical:control:system} is a controlled invariant submanifold $\mathcal{M}\subseteq TQ$ for that system, that is, %$\mathcal{M}\subseteq TQ$ is said to be controlled invariant for the controlled system \eqref{lagrangian:control:system} if
there exists a control function $\hat{u}:\mathcal{M}\rightarrow \mathbb{R}^{m}$ such that the solution of the closed-loop system satisfies $\psi_{t}(\mathcal{M})\subseteq \mathcal{M}$, where $\psi_{t}:TQ\rightarrow TQ$ denotes its flow.
\end{definition}

%Now we present the theorem for existence and uniqueness of a control law making the constraint submanifold $\mathcal{M}$ is control invariant given in ....

\begin{theorem}[\cite{VNNHC}]\label{main:theorem1}
If the velocity-dependent distribution, $S(v_q)$, is transversal to the control input distribution $\mathcal{F}$ and $T_{v_{q}}\mathcal{M}\cap \mathcal{F}^V=\{0\}$, then there exists a unique smooth control function making $\mathcal{M}$ a virtual nonholonomic constraint associated with the mechanical control system \eqref{mechanical:control:system}.
\end{theorem}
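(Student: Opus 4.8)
The plan is to reduce controlled invariance to an infinitesimal (tangency) condition and then to the invertibility of a single $m\times m$ matrix assembled from the constraint and the input vector fields. Since $\mathcal{M}=\Phi^{-1}(\{0\})$ is a regular level set, the closed-loop vector field obtained from \eqref{SODE} by applying a feedback $\hat u:\mathcal{M}\to\mathbb{R}^m$ leaves $\mathcal{M}$ invariant if and only if it is tangent to $\mathcal{M}$ at every $v_q\in\mathcal{M}$, i.e. $T_{v_q}\Phi\big(G(v_q)+\hat u_b(v_q)(Y^b)^V_{v_q}\big)=0$. Componentwise this is the linear system
\begin{equation*}
    G(\phi^a)+\hat u_b\,(Y^b)^V(\phi^a)=0,\qquad a=1,\dots,m,
\end{equation*}
so I would first record that controlled invariance of $\mathcal{M}$ is exactly solvability of this system for $\hat u$ at each point of $\mathcal{M}$.

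Writing $C^{a}_{b}:=(Y^b)^V(\phi^a)$ and $\beta^a:=G(\phi^a)$, the condition becomes $C\hat u=-\beta$, and the entire statement reduces to showing that the $m\times m$ matrix $C$ is invertible along $\mathcal{M}$: invertibility gives the unique solution $\hat u=-C^{-1}\beta$, and since the entries of $C$ and $\beta$ are smooth functions of $v_q$ and $C^{-1}$ depends smoothly on $C$ wherever $\det C\neq 0$ (Cramer's rule), the resulting $\hat u$ is automatically smooth. Thus existence, uniqueness and smoothness all follow from one invertibility fact.

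The heart of the argument, and the step I expect to be the main obstacle, is converting the two geometric hypotheses into invertibility of $C$. Using the vertical-lift computation $(Y^b)^V(\phi^a)=\big\langle \frac{\partial\phi^a}{\partial\dot{q}^i}dq^i,\,Y^b\big\rangle$, the covectors $\mu^a:=\frac{\partial\phi^a}{\partial\dot{q}^i}dq^i$ appearing here are precisely the annihilators defining $S(v_q)$, so $Cc=0$ forces $w:=c_bY^b\in\mathcal{F}_q$ to satisfy $\mu^a(w)=0$ for all $a$, i.e. $w\in S(v_q)$. Equivalently, the vertical lift $w^V$ lies in $\ker T_{v_q}\Phi=T_{v_q}\mathcal{M}$, so $w^V\in T_{v_q}\mathcal{M}\cap\mathcal{F}^V$. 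I would isolate as the key lemma the identification $T_{v_q}\mathcal{M}\cap\mathcal{F}^V=(S(v_q)\cap\mathcal{F}_q)^V$, after which the hypothesis $T_{v_q}\mathcal{M}\cap\mathcal{F}^V=\{0\}$ yields $w^V=0$ and hence $w=0$. Finally, transversality $S(v_q)+\mathcal{F}_q=T_qQ$ together with $S(v_q)\cap\mathcal{F}_q=\{0\}$ promotes this to the splitting $T_qQ=S(v_q)\oplus\mathcal{F}_q$; combined with $\dim S(v_q)=n-m$ coming from regularity, this forces $\dim\mathcal{F}_q=m$ and thus the linear independence of $Y^1,\dots,Y^m$, so $w=c_bY^b=0$ gives $c=0$. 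Hence $C$ is injective, therefore invertible, which completes the argument.
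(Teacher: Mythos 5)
Your proposal is correct and follows essentially the same route as the paper: your kernel argument for invertibility of $C$ (identifying $T_{v_q}\mathcal{M}\cap\mathcal{F}^V$ with vertical lifts of vectors in $\mathcal{F}_q$ annihilated by the $\mu^a$, then invoking $T_{v_q}\mathcal{M}\cap\mathcal{F}^V=\{0\}$) is exactly the proof of Lemma \ref{matrix C^ab}, and the unique solution $\hat u=-C^{-1}\beta$ of the tangency system is precisely the restriction to $\mathcal{M}$ of the control law of Theorem \ref{main:theorem}, where the $K\phi^b$ term vanishes. The one place you diverge is in \emph{deriving} the linear independence of $Y^1,\dots,Y^m$ from the count $\dim S(v_q)=n-m$: that count is a Chetaev-type fiberwise regularity condition (full rank of $\partial\Phi/\partial\dot{q}$), which does not follow from the surjectivity of $T_{v_q}\Phi$ that the paper calls regularity, whereas the paper simply takes independence of the $Y^a$ as a standing hypothesis on the input distribution; since under the theorem's two hypotheses the two conditions are equivalent, this is a bookkeeping difference rather than a gap.
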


%\textcolor{red}{Include example for linear case: the roller race}
%\color{blue}

\section{Stabilization of Virtual  Nonholonomic Constraints}\label{sec3}

%A nonlinear nonholonomic constraint on a mechanical system is a submanifold $\mathcal{M}$ of the tangent bundle $TQ$ from which the velocity of the system can not leave. Mathematically, the constraint may be written as the set of points where a function of the type $\phi:TQ \rightarrow \mathbb{R}^{m}$ vanishes, where $m < n=\dim Q$. That is, $\mathcal{M}=\phi^{-1}(\{0\})$. If every point in $\mathcal{M}$ is regular, i.e., the tangent map $T_{p}\phi$ is surjective for every $p\in \mathcal{M}$, then $\mathcal{M}$ is a submanifold of $TQ$ with dimension $2n-m$ by the regular level set theorem. 

Let $\Phi = (\phi^{1}, \dots, \phi^{m})$ denote the coordinate functions of the constraint $\phi$. The interplay between the submanifold $\mathcal{M}$ and a distribution $\mathcal{F}$ is given as follows

\begin{lemma}\label{matrix C^ab}
     Let $\mathcal{M}$ be the constraint submanifold defined by $\phi(q,\dot{q})=0$ and $\mathcal{S}(v_q)$ be the velocity-dependent distribution. Suppose that $\mathcal{S}(v_q)$ is transversal to the input distribution $\mathcal{F}$ generated by the vector fields $\{Y^{b}\}$ and that $T_{v_q}\mathcal{M}\cap\mathcal{F}^V=\{0\}$, then the matrix
    $C^{ab}=(Y^{a})^{V}(\phi^{b})$
    where $\phi^b$ are the components of $\phi(q,\dot{q})$, is invertible with smooth inverse.
\end{lemma}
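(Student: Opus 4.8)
The plan is to reduce the invertibility of $C$ to a statement about the pairing between the covectors that cut out $S(v_q)$ and the vector fields that span $\mathcal{F}$, and then to read off the conclusion directly from the transversality hypotheses. First I would fix local coordinates $(q,\dot q)$ on $TQ$ and compute the entries. Writing $Y^{a}=(Y^{a})^{i}\,\partial/\partial q^{i}$, the vertical lift is $(Y^{a})^{V}=(Y^{a})^{i}\,\partial/\partial\dot q^{i}$, so
\begin{equation*}
C^{ab}=(Y^{a})^{V}(\phi^{b})=(Y^{a})^{i}\,\frac{\partial\phi^{b}}{\partial\dot q^{i}}=\langle \alpha^{b},Y^{a}\rangle,\qquad \alpha^{b}:=\frac{\partial\phi^{b}}{\partial\dot q^{i}}(v_{q})\,dq^{i}.
\end{equation*}
By the very definition of the velocity-dependent distribution, $S(v_{q})=\{X\in T_{q}Q:\alpha^{b}(X)=0,\ b=1,\dots,m\}$, so the $\alpha^{b}$ span the annihilator of $S(v_{q})$, while $\mathcal{F}_{q}=\spn\{Y^{a}\}$. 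Since each $Y^{a}$, each $\phi^{b}$, and the vertical lift depend smoothly on $v_{q}$, the assignment $v_{q}\mapsto C^{ab}(v_{q})$ is smooth; once pointwise invertibility is established, Cramer's rule immediately yields a smooth inverse, so the whole burden is the pointwise statement.

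For pointwise invertibility I would show that $C$ has trivial left null space. Suppose $c=(c_{1},\dots,c_{m})\in\R^{m}$ satisfies $\sum_{a}c_{a}C^{ab}=0$ for every $b$, and set $Y_{c}:=\sum_{a}c_{a}Y^{a}\in\mathcal{F}_{q}$. The identity above gives $\alpha^{b}(Y_{c})=\sum_{a}c_{a}\langle\alpha^{b},Y^{a}\rangle=0$ for all $b$, hence $Y_{c}\in S(v_{q})$, so $Y_{c}\in S(v_{q})\cap\mathcal{F}_{q}$. Equivalently, phrased through the second hypothesis, the condition $\sum_{a}c_{a}C^{ab}=0$ says exactly that $(Y_{c})^{V}=\sum_{a}c_{a}(Y^{a})^{V}$ lies in $\ker T_{v_{q}}\Phi=T_{v_{q}}\mathcal{M}$ (using the regular-level-set identification $T_{v_{q}}\mathcal{M}=\ker T_{v_{q}}\Phi$ and $d\phi^{b}(w)=w(\phi^{b})$), while it manifestly lies in $\mathcal{F}^{V}$; thus $(Y_{c})^{V}\in T_{v_{q}}\mathcal{M}\cap\mathcal{F}^{V}$.

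Now I invoke the hypotheses. The condition $T_{v_{q}}\mathcal{M}\cap\mathcal{F}^{V}=\{0\}$ forces $(Y_{c})^{V}=0$, and since the vertical lift is injective this gives $Y_{c}=\sum_{a}c_{a}Y^{a}=0$ (the transversality of $S(v_{q})$ and $\mathcal{F}$ yields the same conclusion, since $S(v_{q})\cap\mathcal{F}_{q}=\{0\}$ once $\dim S(v_{q})+\dim\mathcal{F}=(n-m)+m=n$). Because the control force vector fields $\{Y^{a}\}=\{\sharp_{\mathcal{G}}(f^{a})\}$ are linearly independent, i.e.\ $\mathcal{F}$ has rank $m$ (which is the standing assumption of $m$ independent inputs, and is also implied by transversality together with the rank-$m$ condition on $\partial\phi^{b}/\partial\dot q^{i}$), we conclude $c=0$. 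Hence $C$ has trivial left null space, so it is invertible at each point of $\mathcal{M}$, with smooth inverse by Cramer's rule.

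I expect the main obstacle to be bookkeeping rather than conceptual: one must contract $C^{ab}$ in the \emph{correct} index so that the resulting combination $\sum_{a}c_{a}Y^{a}$ lands simultaneously in $S(v_{q})$ and in $\mathcal{F}_{q}$, which is precisely what makes the transversality hypothesis bite. Contracting over $b$ instead produces a covector $\sum_{b}d_{b}\alpha^{b}$ in the annihilator of $\mathcal{F}$ and would require the dual transversality statement. The only other care point is making explicit the two translations that turn the algebraic condition ``$C$ is singular'' into the two geometric intersection conditions: the identification $T_{v_{q}}\mathcal{M}=\ker T_{v_{q}}\Phi$ and the injectivity of the vertical lift.
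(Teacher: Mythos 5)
Your proof is correct and follows essentially the same route as the paper's: you contract $C^{ab}$ over the input index $a$, observe that the resulting vertical lift $(Y_c)^V=\sum_a c_a (Y^a)^V$ annihilates all $\phi^b$ and hence lies in $T_{v_q}\mathcal{M}\cap\mathcal{F}^V=\{0\}$, then use injectivity of the vertical lift and linear independence of the $Y^a$ to force $c=0$. Your explicit treatment of smoothness via Cramer's rule and your remark that either hypothesis suffices are minor refinements of, not departures from, the paper's argument.
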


\begin{proof}
It is easily seen that $(Y^{a})^{V}(\phi^{b}) = d\phi^b(Y^a)^V=\frac{\partial\phi^b}{\partial\dot{q}}(Y^a)^V$. Since the columns of the matrix $C^{ab}$ are linearly independent, $C^{ab}$ has full rank, namely, $\displaystyle{c_{1}\begin{bmatrix} \frac{\partial\phi^1}{\partial \dot{q}}(Y^{1}) \\
        \vdots \\
        \frac{\partial\phi^m}{\partial \dot{q}}(Y^{1}) \end{bmatrix} + \cdots + c_{m}\begin{bmatrix} \frac{\partial\phi^1}{\partial \dot{q}}(Y^{m}) \\
        \vdots \\
        \frac{\partial\phi^m}{\partial \dot{q}}(Y^{m}) \end{bmatrix}= 0}$, which is equivalent to
    \begin{equation*}
        \begin{bmatrix} \frac{\partial\phi^1}{\partial \dot{q}}(c_{1}Y^{1}+\cdots + c_{m}Y^{m}) \\
        \vdots \\
        \frac{\partial\phi^m}{\partial \dot{q}}(c_{1}Y^{1}+\cdots + c_{m}Y^{m}) \end{bmatrix}=0.
    \end{equation*}
By transversality we have $T_{v_q}\mathcal{M}\cap \mathcal{F}^V = \{0\}$ which implies that $c_{1}Y^{1}+\cdots + c_{m}Y^{m}=0$. Since $\{Y^{i}\}$ are linearly independent we conclude that $c_{1}=\cdots=c_{m}=0$ and $C^{ab}$ has full rank. But, since $C^{ab}$ is an $m\times m$ matrix, and $\mathcal{M}$ is a constrained submanifold, it must be invertible
\end{proof}

%\color{black}

%$\left(\begin{tabular}{l}
%Note here that \\
%$\phi^b(q,\dot{q})=\mu^b_i\dot{q}^i + k_i(q) \quad\text{ are the constraints, }$ \\

%$\mu^{b}(q,\dot{q})=\mu^b_i(q,\dot{q})dq^i \quad\text{ are the one forms defining } \D,$ \\

%$\text{ and } (Y^a)^V=Y^a_i\frac{\partial}{\partial \dot{q}^i}.$\\

%So, $C^{ab}=(Y^{a})^{V}(\phi^{b}) = \mu^{b}(Y^{a}).$
%\end{tabular}\right)$ 

%\vspace{0.5cm}

\begin{theorem}\label{main:theorem}
    Given a mechanical control system of the form \eqref{mechanical:control:system} and a virtual constraint submanifold $\mathcal{M}$ determined by $\phi^b=0, \; b=1,\dots, m$, suppose that the input distribution $\mathcal{F}$, generated by the vector field $\{Y^{a}\}$, is transversal to velocity-dependent distribution, $S(v_q)$ . 
    
    Then, $u^{*}:TQ \to \R^{m}$ given by the expression
    \begin{equation}\label{stab:contr:law}
        u_{a}^{*} = C_{a b}(-K\phi^{b}-G(\phi^{b})),
    \end{equation}
    where $G$ is the geodesic vector field, $C_{ab}$ is the inverse matrix of $C^{ab}=(Y^{a})^{V}(\phi^{b})$ and $K$ is a diagonal matrix with positive design parameters $k_i$, $i=1,\dots, m$, satisfies
    \begin{enumerate}
        \item $\phi^{b}\to 0$ exponentially fast along the system trajectories, for $b=1,\ldots, m$.
        \item $u^{*}|_{\mathcal{M}}$ is the unique control law whose existence is guaranteed by Theorem \ref{main:theorem1}.
    \end{enumerate}
\end{theorem}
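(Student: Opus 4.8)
The plan is to track the constraint functions $\phi^{b}$ along the closed-loop trajectories and show they satisfy a decoupled linear ODE. The key observation is that the trajectories of \eqref{mechanical:control:system} are the integral curves of the SODE vector field $\Gamma$ in \eqref{SODE}, and each $\phi^{b}$ is a genuine function on $TQ$. Hence, for a closed-loop trajectory $t\mapsto(q(t),\dot q(t))$ obtained by substituting a state-feedback $u$, the time derivative of $\phi^{b}$ is the Lie derivative of $\phi^{b}$ along $\Gamma$. First I would write
\begin{equation*}
    \dot\phi^{b} = \Gamma(\phi^{b}) = G(\phi^{b}) + u_{a}(Y^{a})^{V}(\phi^{b}) = G(\phi^{b}) + u_{a}C^{ab},
\end{equation*}
where the last equality is just the definition of the matrix $C^{ab}$ from Lemma \ref{matrix C^ab}.

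Next I would substitute $u = u^{*}$ from \eqref{stab:contr:law}. By Lemma \ref{matrix C^ab}, the transversality hypothesis guarantees that $C^{ab}$ is invertible with smooth inverse $C_{ab}$, so the actuated term $u_{a}^{*}C^{ab}$ collapses to $-k_{b}\phi^{b}-G(\phi^{b})$ componentwise; the $G(\phi^{b})$ terms cancel and one is left with $\dot\phi^{b} = -k_{b}\phi^{b}$ for each $b$. These are $m$ decoupled scalar linear ODEs with solutions $\phi^{b}(t)=\phi^{b}(0)e^{-k_{b}t}$, and since every design parameter $k_{b}>0$, each $\phi^{b}\to 0$ exponentially, establishing item (1).

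For item (2), I would restrict the control law to $\mathcal{M}$, where $\phi^{b}=0$, so that \eqref{stab:contr:law} reduces to $u_{a}^{*}|_{\mathcal{M}} = -C_{ab}G(\phi^{b})$. The unique control law $\hat u$ of Theorem \ref{main:theorem1} is the one rendering $\mathcal{M}$ controlled invariant, which is equivalent to tangency of the closed-loop field to the level set $\{\phi^{b}=0\}$, i.e. $\Gamma(\phi^{b})=0$ on $\mathcal{M}$. Setting $\dot\phi^{b}=0$ with $\phi^{b}=0$ in the computation above gives $\hat u_{a}C^{ab}=-G(\phi^{b})$, hence $\hat u_{a}=-C_{ab}G(\phi^{b})$, which coincides with $u_{a}^{*}|_{\mathcal{M}}$. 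By uniqueness in Theorem \ref{main:theorem1}, the two agree on $\mathcal{M}$.

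The computation itself is short; the points requiring care are the following. First, the identification $\dot\phi^{b}=\Gamma(\phi^{b})$ relies on reading the second-order dynamics as the flow of $\Gamma$ on $TQ$ and treating the $\phi^{b}$ as functions on $TQ$, so that the feedback $u^{*}$ is defined on all of $TQ$ and the closed-loop field $G+u_{a}^{*}(Y^{a})^{V}$ is a bona fide vector field. Second, one must keep the matrix index bookkeeping consistent so that $u_{a}^{*}C^{ab}$ reproduces $-K\phi-G(\phi)$ exactly, which is precisely where the invertibility from Lemma \ref{matrix C^ab} enters. I expect the main (though mild) obstacle to be this interplay: verifying that the single feedback \eqref{stab:contr:law} simultaneously cancels the drift $G(\phi^{b})$ off the manifold and reduces, on $\mathcal{M}$, to the invariance-enforcing law of Theorem \ref{main:theorem1}. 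I would also note that the exponential estimate presumes the trajectories remain in the domain of definition for $t\ge 0$, which holds at least locally around $\mathcal{M}$.
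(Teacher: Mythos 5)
Your proposal is correct and follows essentially the same route as the paper: both compute $\dot\phi^{b}=\Gamma(\phi^{b})=G(\phi^{b})+u_{a}C^{ab}$, invoke Lemma \ref{matrix C^ab} for invertibility of $C^{ab}$, substitute \eqref{stab:contr:law} to obtain the decoupled exponentially stable ODEs $\dot\phi^{b}=-k_{b}\phi^{b}$, and then establish item (2) by observing that on $\mathcal{M}$ the closed-loop field satisfies $\Gamma(\phi^{b})=0$, hence is tangent to $\mathcal{M}$, and appealing to the uniqueness in Theorem \ref{main:theorem1}. Your only (harmless) variation is that in item (2) you additionally solve the tangency condition for the explicit formula $\hat u_{a}=-C_{ab}G(\phi^{b})$ and match it to $u^{*}|_{\mathcal{M}}$, whereas the paper concludes directly from tangency plus uniqueness.
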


\begin{proof}
    We show at first that the control law given by \eqref{stab:contr:law} exponentially stabilizes the controlled system \eqref{mechanical:control:system}, in the sense that it drives the system into complying with the constraint exponentially fast. A trajectory, $q$, of the closed loop system \eqref{mechanical:control:system} is an integral curve of the vector field $\Gamma$ of the form \eqref{SODE}. Note that $G$ is the geodesic vector field associated to the Riemannian metric $\mathcal{G}$. Locally, we write
    \[\Gamma(q,\dot{q})=\dot{q}^i\frac{\partial}{\partial q^i}+\left(-\Gamma^j_{ik}\dot{q}^j\dot{q}^k + Y^0_i + u^aY^a_i\right)\frac{\partial}{\partial\dot{q}^i},\] where $\Gamma^j_{ik}$ are the Christoffel symbols. Since the curve $q$ should exponentially vanish the constraint we examine $\phi^b$ along the vector filed $\Gamma.$ Namely,
     \begin{equation*}
            \begin{split}
                \Gamma(\phi^{b}) & = G(\phi^{b}) + u_{a}^{*}(Y^{a})^{V}(\phi^{b}) 
                 = G(\phi^{b}) + C^{ab}u_{a}^{*},
            \end{split}
        \end{equation*}
    where $C^{ab}$ is as in Lemma \ref{matrix C^ab}. Using the control law provided by \eqref{stab:contr:law} we have
    \begin{equation*}
            \begin{split}
                \Gamma(\phi^{b}) & = G(\phi^{b}) + C^{ab}C_{ab}(- K\phi^{b} - G(\phi^{b})) 
                 = - K\phi^{b}.
            \end{split}
        \end{equation*}
    Thus, since $\Gamma(\phi^{b}) = - K\phi^{b}$ for $b=1,\dots,m$, the real valued functions $\phi^{b}$ along the curve $q$ of the closed-loop system \eqref{mechanical:control:system}, $\phi^b(q(t),\dot{q}(t))$ satisfy $\phi^b(t)=\phi^b(0)e^{-Kt}$ for $t\in[a,b].$ Therefore, for every $b=1,\dots, m$, $\phi^b$ converges exponentially fast to zero.
    
    Secondly, we prove that the law given by \eqref{stab:contr:law} is the unique control law guaranteed by Theorem \ref{main:theorem1}.  For any value on the constraint manifold, $v_q\in\mathcal{M}$, the closed-loop system with this control law reads, $\Gamma(v_{q})(\phi^{b}) = - K\phi^{b}(v_{q})$. But, since $v_q\in\mathcal{M}$ the right-hand side of the last equation vanishes. Thus, $\Gamma$ is tangent to $\mathcal{M}$. By uniqueness of the control law given in Theorem $\ref{main:theorem1}$, we have that $u^{*}|_{\mathcal{M}}$ is the unique control law turning $\mathcal{M}$ into a virtual nonholonomic constraint.\end{proof}

\begin{remark}
    When affine constraints are considered for each $q\in Q$ the velocities belong to an affine subspace $\mathcal{A}_q$ of the tangent space $T_qQ.$ Thus, $\mathcal{A}_q$ can be written as a sum of a vector field $X\in\mathfrak{X}(Q)$ and a nonintegrable distribution $\mathcal{D}$ on $Q$, i.e. $\mathcal{A}_q=X(q)+\mathcal{D}_q$, where $\mathcal{D}$ is of constant rank $r$, with $1<r<n$. In this case, we say that the affine space $\mathcal{A}_q$ is modeled on the vector subspace $\mathcal{D}_q$. In local coordinates $\mathcal{D}$ can be expressed as the null space of a $q$-dependent matrix $S(q)$ of dimension $m\times n$ and $\rank S(q)=m,$ with $m=n-r$ as $\mathcal{D}_q=\{\dot{q}\in T_qQ : S(q)\dot{q}=0\}$. The rows of $S(q)$ can be represented by the coordinate functions of $m$ independent 1-forms $\mu^b=\mu^b_idq^i$, $1\leq i\leq n, \; 1\leq b\leq m$.
The affine distribution is $\mathcal{A}_q=\{\dot{q}\in T_qQ : S(q)(\dot{q}-X(q))=0\},$ hence  $\mathcal{A}=\{(q,\dot{q})\in TQ : \Phi(q,\dot{q})=0\}$, with $\Phi(q,\dot{q})=S(q)\dot{q}+Z(q)$ and $Z(q)=-S(q)X(q)\in\R^m$ (see \cite{Sansonetto} for instance and \cite{affine}). Let $Z(q)=[z_1(q), \dots , z_m(q)]^T$ then the components of the affine constraints are given by the real value functions $\phi^b(q,\dot{q})=\mu^b_i\dot{q}^i + z_b(q)$ where $\mu^b_i\dot{q}^i$ are the entries of $S(q)\dot{q}$, $b=1,\dots ,m$. Note that $\phi(q,\dot{q})=0$ is equivalent to $\phi^b=0$ for all $b.$ 
In this case the matrix of Lemma \ref{matrix C^ab} reads $C^{ab}=(Y^a)^V(\phi^b)=\mu^b(Y^a)$ where $\mu^b$ are the 1-forms that define the distribution $\mathcal{D}$.

The case in which we consider linear nonholonomic constraints is a particular case of the aforementioned with $Z(q)=0$ (see \cite{virtual} and \cite{stabilization}).
\end{remark}

\begin{remark}
    If the controlled system \eqref{mechanical:control:system} is subject to holonomic constraints determined by the points $q\in Q$ where $\phi^{b}_h(q)=0$, for a given set of functions $\phi^{b}_h:Q\to\R$ with $b=1,\ldots, m$, instead of nonholonomic ones, then a similar derivation might be used to adapt the control law given in Theorem \ref{main:theorem} resulting in a new control law with the form \[u_{a}^{*} = C_{a b}\left(-K_1\phi^{b}_h - K_2\dot{\phi}^{b}_h-G(\phi^{b}_h)\right),\] to exponentially converge to zero, where $K_i$, $i=1,2$ are diagonal matrices with positive design parameters.
\end{remark}

\color{black}

\section{Applications}\label{sec4}

\subsection{Geometric stabilization for flocking motion}

Flocking, swarming, and schooling are common emergent
collective motion behaviors exhibited in nature. These natural collective behaviors can be leveraged in multirobot systems to safely transport large cohesive groups of robots within a workspace. To capture these effects, Reynolds introduced three heuristic rules: cohesion; alignment; and
separation, to reproduce flocking motions in computer graphics in~\cite{flocking}.

In this application, we study the stabilization problem of virtual nonlinear nonholonomic constraints to impose alignment motion in a multiagent system. Consider four particles moving under the influence of gravity and which we desire to constrain to move with parallel (aligned) velocity. One of the particles will not be controlled and, as a consequence, the remaining particles will try to align their velocities with respect to the uncontrolled particle velocity. Suppose that the motion of the particles evolves in a plane parametrized by $(x,z)$. The position of each particle is given by $q_i=(x_i,z_i)$, with $i=1,2,3,4$, so the configuration space can be considered as $Q=\R^8$ with $q=(q_1,q_2,q_3,q_4)\in Q$.

The Lagrangian $L:\R^{8}\times\R^8\to\R,$ is given by $\displaystyle{L(q,\dot{q})=\sum_{i=1}^{4}\frac{1}{2}m_i\dot{q}_i^2 - m_igz_i}$ where $m_i, i=1,2,3,4$ are the masses of the particles, respectively. The constraint is given by the equation $\Phi:\R^{8}\times\R^8\to\R^3, \; \Phi(q,\dot{q})=[\phi^1 \; \phi^2 \; \phi^3]^T$ where
$\phi^b(q,\dot{q})=\dot{x}_4\dot{z}_b - \dot{x}_b\dot{z}_4$,
$b=1,2,3$ and the control force is just $F:\R^{8}\times\R^8\times\R\to \R^{8}\times\R^8$ given by 
$F(q,\dot{q},u)=u_aF^a$, with $F^a=(q,\dot{q})=dx_a + dz_a$, $a=1,2,3$. The controlled Euler-Lagrange equations are 
\begin{equation}
    \begin{split}
         m_i \ddot{x}_i &= u_i, \,
         m_i\ddot{z}_i +m_ig = u_i, \quad i=1,2,3,\\
         m_4\ddot{x}_4 &= 0, \quad
         m_4 \ddot{z}_2 + m_4g = 0,
    \end{split}
\end{equation}
 The constraint manifold is $\mathcal{M}=\{(q,\dot{q})\in \R^{8}\times\R^8\; :\; \Phi(q,\dot{q})=0\}$ and its tangent space, at every point $(q,\dot{q})\in\mathcal{M}$, is given by        $T_{(q,\dot{q})}\mathcal{M}=\{v\in (\R^8\times\R^8)\times (\R^8\times\R^8)\color{black}\; :\; d\Phi(v)=0\}$.

For the input distribution $\mathcal{F}$ we have $\mathcal{F}=\spn\{Y^a\}$ where $Y^a$, $a=1,2,3$ are the vector fields $\displaystyle{Y=\frac{1}{m_a}\frac{\partial}{\partial x_a} + \frac{1}{m_a}\frac{\partial}{\partial z_a}}.$
Note here that the vertical lift of the input distribution, $\mathcal{F}^V$, which is generated by $\displaystyle{Y^V=\frac{1}{m_a}\frac{\partial}{\partial \dot{x}_a} + \frac{1}{m_a}\frac{\partial}{\partial \dot{z}_a}},$ is transversal to the tangent space of the constraint manifold, $T\mathcal{M}$. Using the control law in Theorem \ref{main:theorem} stabilizing the constraint manifold and making it invariant is
    \begin{equation*}
            \hat{u}=g\left(\dot{x}_4-\dot{z}_4\right)^{-1}diag(m_1,m_2,m_3)\begin{pmatrix}
                \dot{x}_4-\dot{x}_1 \\
                \dot{x}_4-\dot{x}_2 \\
                \dot{x}_4-\dot{x}_3
            \end{pmatrix}.
    \end{equation*}

For the control law that stabilizes the system we use Theorem \ref{main:theorem}, so $u_{a}^{*} = C_{a b}(-\phi^{b}-G(\phi^{b})),$ where \begin{equation*}
    C^{ab}=(Y^a)^V(\phi^b) = (\dot{x}_4 - \dot{z}_4)diag\left(\frac{1}{m_1},\frac{1}{m_2},\frac{1}{m_3}\right)
\end{equation*}

For $i=1,\ldots,4$ the vector field $G$ which is determined by the unactuated forced mechanical system is locally given by
$\displaystyle{G=\dot{x}_i\frac{\partial}{\partial x_i} + \dot{z}_i\frac{\partial}{\partial z_i} -g\frac{\partial}{\partial\dot{z}_i}}$, hence, $G(\hat{\mu}^b)= \begin{pmatrix}
                \dot{x}_4-\dot{x}_1 \\
                \dot{x}_4-\dot{x}_2 \\
                \dot{x}_4-\dot{x}_3
            \end{pmatrix}$. Thus, 
\begin{equation*}
    u^{*} = \frac{diag\left(m_1,m_2,m_3\right)}{\dot{x}_4 - \dot{z}_4}\left[g\begin{pmatrix}
        \dot{x}_4-\dot{x}_1 \\
                \dot{x}_4-\dot{x}_2 \\
                \dot{x}_4-\dot{x}_3
    \end{pmatrix}-\begin{pmatrix}
        \dot{x}_4\dot{z}_1 - \dot{x}_1\dot{z}_4 \\
        \dot{x}_4\dot{z}_2 - \dot{x}_2\dot{z}_4 \\
        \dot{x}_4\dot{z}_3 - \dot{x}_3\dot{z}_4
    \end{pmatrix}\right].
\end{equation*}

We have simulated the closed-loop control system with the preferred feedback control law using a standard fourth-order Runge-Kutta method where particle 4 makes an unconstrained motion while the other 3 are controlled for an alignment in their velocities. The closed-loop system evolves into the $xz$ plane. The initial positions of the particles are $(x_1,z_1)=(10,56),$ $(x_2,z_2)=(30,100),$ $(x_3,z_3)=(50,100)$ and $(x_4,z_4)=(10,90)$ and initial velocities $(\dot{x}_1,\dot{z}_1)=(0.5,1),$ $(\dot{x}_2,\dot{z}_2)=(1,1),$ $(\dot{x}_3,\dot{z}_3)=(-1,-1)$ and $(\dot{x}_4,\dot{z}_4)=(0.6,0)$ in thousand units. 

In Figure \ref{trajectories} we plot the trajectories of the 4 particles. In Figure \ref{constraints}, we plot a 3d trajectory that represent the values of the constraint function $\Phi=[\phi^1,\phi^2,\phi^3]^T$. We have considered the all particles' masses $m_i=2$ and $g=10$. The total simulation time was $500$ seconds, with a time step of $0.01$ seconds, resulting in $50000$ steps. We would like to remark that the trajectories cross the line $z=0$, which under reasonable models, could correspond to the ground. We have opted to simulate motion past this point for visualization purposes to avoid starting with higher initial value for $z$, but we remark that the trajectories' profile would not have changed. Also, after visualizing Figure \ref{trajectories} one observes intersections in the $(x,z)$ plane between different trajectories, and one might think this should be inconsistent with velocity alignment. However, this is not the case. We stress that these particles do not collide and pass through the intersection point at very different times.

\begin{figure}[htb!]
        \centering
        \includegraphics[width=0.9\linewidth]{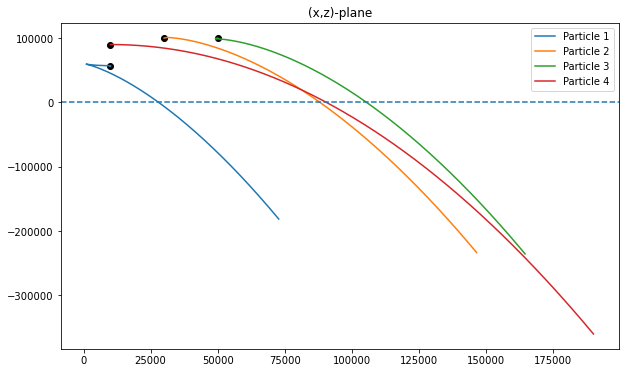}
        \caption{Trajectories of the 4 agents of the flocking closed-loop system. The black points indicate the initial positions.}
        \label{trajectories}
    \end{figure}

\begin{figure}[htb!]
        \centering
        \includegraphics[width=0.65\linewidth]{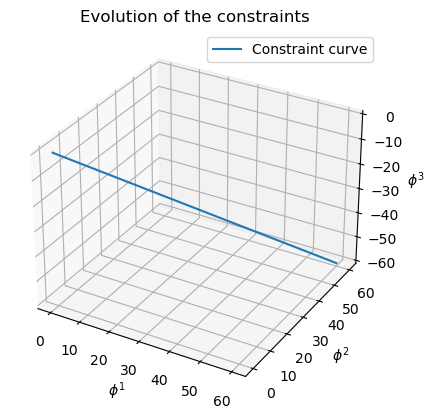}
        \caption{Constraint functions along the same trajectory.}
        \label{constraints}
    \end{figure}
%set_xticklabels: to change the labels of the graph

\subsection{Application to the control of a USV navigating a stream}

%\textcolor{magenta}{Define USV here?}

Consider an Unmanned Surface Vehicle (USV) consisting of a boat with a payload on the sea with a position-dependent stream. The position of the boat's center of mass is modeled by the configuration manifold $\mathbb{R}^{2}$ to which we add an orientation to obtain a complete description of its location in space, so that the system total configuration manifold is $\mathbb{R}^{2}\times \mathbb{S}$ with local coordinates $q=(x,y,\theta)$. The sea's current is modeled by the vector field $C:\mathbb{R}^{2}\rightarrow \mathbb{R}^{2}$, $C=(C^{1}(x,y), C^{2}(x,y))$.

The boat is well-modeled by a forced mechanical system with Lagrangian function
$L= \frac{m}{2}(\dot{x}^{2} + \dot{y}^{2}) + \frac{I}{2}\dot{\theta}^{2},$
where $m$ is the boat's mass, $I$ is the moment of inertia, and the external force is denoted by
$F^{ext} = W^{1} dx + W^{2} dy$
accounting for the action of the current on the center of mass of the boat and to which we add a control force
$F=u(\sin \theta dx - \cos \theta dy + d\theta ).$

The functions $W^{1}$ and $W^{2}$ are defined according to
$$\begin{cases}
    W^{1} &=  m  \ d \left(\sin^{2}\theta C^{1} - \sin\theta\cos\theta C^{2}\right)(\dot{q}) \\
    W^{2} &= m \ d \left(-\sin\theta\cos\theta C^{1} + \cos^{2}\theta C^{2} \right)(\dot{q}).
\end{cases},$$
where $d$ represents the differential of the functions inside the parenthesis. The external force assures that in the absence of controls, the dynamics of the boat satisfies the following kinematic equations
$$\begin{cases}
    \dot{x} = & \sin^{2}\theta C^{1} - \sin\theta\cos\theta C^{2} \\
    \dot{y} = & -\sin\theta\cos\theta C^{1} + \cos^{2}\theta C^{2},
\end{cases}$$
whenever the initial velocities in the $x$ and $y$ direction vanish. The corresponding controlled forced Lagrangian system is
\begin{equation*}
    m\ddot{x}=u \sin\theta + W^{1}, \quad m\ddot{y}=-u \cos\theta + W^{2}, \quad I\ddot{\theta}=u,
\end{equation*}
and, as we will show, it has the following virtual affine nonholonomic constraint
$$\sin\theta \dot{x} - \cos\theta \dot{y}= C^{2}\cos\theta - C^{1}\sin\theta.$$
The input distribution $\mathcal{F}$ is generated just by one vector field $Y=\frac{\sin \theta}{m}\frac{\partial}{\partial x}-\frac{\cos \theta}{m}\frac{\partial}{\partial y}+\frac{1}{I}\frac{\partial}{\partial \theta}$,
while the virtual nonholonomic constraint is the affine space $\mathcal{A}$ modelled on the distribution $\mathcal{D}$ defined as the set of tangent vectors $v_{q}\in T_{q}Q$  where $\mu(q)(v)=0,$ with $\mu=\sin\theta dx - cos\theta dy$. Thus, we may write it as
$\mathcal{D}=\hbox{span}\Big{\{} X_{1}=\cos \theta\frac{\partial}{\partial x} + \sin\theta \frac{\partial}{\partial y},\, X_{2}=\frac{\partial}{\partial \theta}\Big{ \}}$. The affine space is given as the zero set of the function $\phi(q,v) = \mu(q)(v) + Z(q)$ with
$Z(q) = \cos \theta C^{2}(x,y) - \sin \theta C^{1}(x,y)$
or, equivalently, as the set of vectors $v_{q}$ satisfying $v_{q}-C(q)\in \mathcal{D}_{q}$.

We may check that $\mathcal{A}$ is controlled invariant for the controlled Lagrangian system above. In fact, the control law
$\hat{u}(x,y,\theta,\dot{x},\dot{y},\dot{\theta})=-m\dot{\theta}(\cos\theta \dot{x} +\sin \theta \dot{y})$
makes the affine space invariant under the closed-loop system, since in this case, the dynamical vector field arising from the controlled Euler-Lagrange equations given by
\begin{align*}
    \Gamma = & \dot{x}\frac{\partial}{\partial x} + \dot{y}\frac{\partial}{\partial y} + \dot{\theta}\frac{\partial}{\partial \theta}+  \\ & \left(\frac{\hat{u}\sin \theta + W^{1}}{m}\right)\frac{\partial}{\partial \dot{x}} + \left( - \frac{\hat{u}\cos \theta - W^{2}}{m} \right)\frac{\partial}{\partial \dot{y}} + \frac{\hat{u}}{I}\frac{\partial}{\partial \dot{\theta}}
\end{align*}
is tangent to $\mathcal{A}$. This is deduced from the fact that $$\Gamma(\sin\theta \dot{x} - \cos\theta \dot{y} + \cos \theta C^{2}(x,y) - \sin \theta C^{1}(x,y))=0.$$

Let us find the control law that stabilizes the system using Theorem \ref{main:theorem}, namely $u_{a}^{*} = C_{a b}(-\phi^{b}-G(\phi^{b}))$. 
\begin{align*}
    C^{ab}&=(Y^a)^V(\phi^b) =\left(\frac{\sin \theta}{m}\frac{\partial}{\partial \dot{x}}-\frac{\cos \theta}{m}\frac{\partial}{\partial \dot{y}}+\frac{1}{I}\frac{\partial}{\partial \dot{\theta}}\right) \times \\ & \quad \quad \times\left(\sin\theta \dot{x} - \cos\theta \dot{y} - C^{2}\cos\theta + C^{1}\sin\theta\right)= \frac{1}{m}.
\end{align*}

The vector field $G$ which is determined by the unactuated forced mechanical system is given by
\[G=\dot{x}\frac{\partial}{\partial x} + \dot{y}\frac{\partial}{\partial y} + \dot{\theta}\frac{\partial}{\partial\theta} + W^1\frac{\partial}{\partial\dot{x}} + W^2\frac{\partial}{\partial\dot{y}} + 0\frac{\partial}{\partial\dot{\theta}},\]
thus, for $\phi=\sin\theta \dot{x} - \cos\theta \dot{y} - C^{2}\cos\theta + C^{1}\sin\theta$ we have 
\begin{align*}
    G(\phi)= &-\dot{x}\partial_xC^2c\theta + \dot{x}\partial_xC^1s\theta - \dot{y}\partial_yC^2c\theta + \dot{y}\partial_yC^1s\theta \\
    & + \dot{\theta}(\dot{x}c\theta +\dot{y}s\theta + C^2s\theta + C^1c\theta)+ W^1s\theta - W^2c\theta,
\end{align*}
where $\partial_xC^i=\frac{\partial C^i}{\partial x}$ and $\partial_yC^i=\frac{\partial C^i}{\partial y}$ for $i=1,2,$ $s\theta=\sin\theta$ and $c\theta=\cos\theta.$

Ultimately, $u_{a}^{*} = C_{a b}(-\phi^{b}-G(\phi^{b}))$ reads
\begin{align*}
    u^{*} = & -ms\theta\left[\dot{x} + C^1 + \dot{x}\partial_xC^1 + \dot{y}\partial_yC^1 + \dot{y}\dot{\theta} + \dot{\theta}C^2 + W^1 \right] \\
    & -mc\theta\left[-\dot{y} - C^2 - \dot{x}\partial_xC^2 - \dot{y}\partial_yC^2 + \dot{x}\dot{\theta} + \dot{\theta}C^1 - W^2 \right].
\end{align*}

Next, we present simulations of the closed-loop control system with the stability feedback control law using a standard fourth-order Runge-Kutta method. We have simulated two different cases. In the fist case, the boat is on a north-east current described by $C(x,y)=(1,1)$ measured in units per second. The initial position of the boat is $(x,y)=(1,1)$ and its orientation $\theta=\frac{\pi}{2}$ with initial velocity $\dot{q}=(\dot{x},\dot{y},\dot{\theta})=(0.8, 0.5,0)$. The mass of the boat is $m=10$ and the moment of inertia $I=1.5$. The simulation time was $100$ seconds, with a time step of $0.01$ seconds, resulting in $10000$ steps.

\begin{figure}[htb!]
        \centering
        \includegraphics[width=0.9\linewidth]{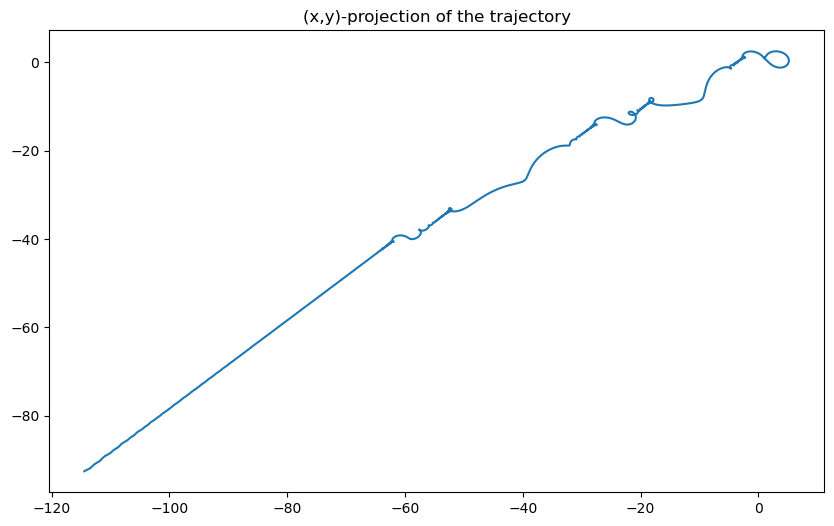} 
        \caption{Left: Projection of a trajectory of the closed-loop system into the plane $xy$ of the boat on a north-east stream. }
        \label{boat:traj:n-e}
\end{figure}

%\begin{figure}[htb!]
 %       \centering
        %\includegraphics[width=0.4\linewidth]{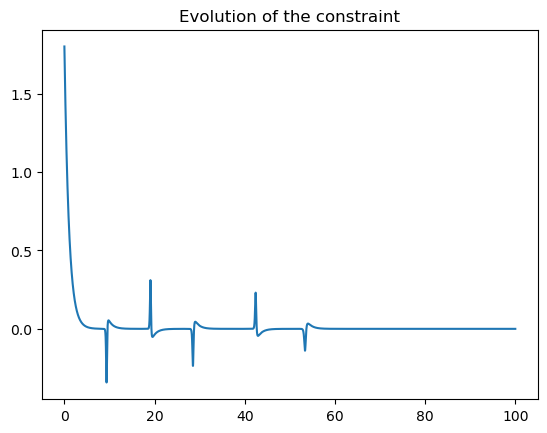}
        %\caption{Constraint $\phi(t)$.}
        %\label{boat:const:n-e}
%\end{figure}

\begin{figure}[htb!]
        \centering
        \includegraphics[width=0.49\linewidth]{boatconstraintne.png}\includegraphics[width=0.49\linewidth]{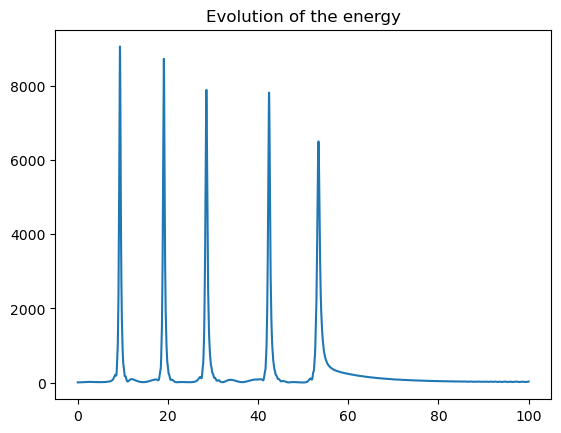}
        \caption{Left: Constraint $\phi(t)$. Right: Energy of the system.}
        \label{boat:const-energy:n-e}
\end{figure}

In Figure \ref{boat:traj:n-e} we graph the trajectory of the boat under the effect of the current and the control force. In Figure \ref{boat:const-energy:n-e} we plot the evolution of the constraint and the energy of the system over time. Note that a significant amount of energy is used to force the boat comply with the constraints.

In the second case, the current of the sea is an anticyclone (high-pressure area) described by $C(x,y)=(y,-x+y)$. The initial position and orientation of the boat was the same as before, $q=(x,y,\theta)=(1,1,\frac{\pi}{2})$, with initial velocity $\dot{q}=(\dot{x},\dot{y},\dot{\theta})=(1, 1,0)$. The mass of the boat is $m=20$ and the moment of inertia $I=4$. The simulation time was set to $100$ seconds, with a time step of $0.01$ seconds, resulting in $10000$ steps. In Figure \ref{boat:traj:antic} we plot the trajectory of the boat on the $xy$ plane under the effect of the current and the control force. In Figure \ref{boat:const:antic} we plot the evolution of the constraint and the energy of the system over time.

\begin{figure}[htb!]
        \centering
        \includegraphics[width=0.9\linewidth]{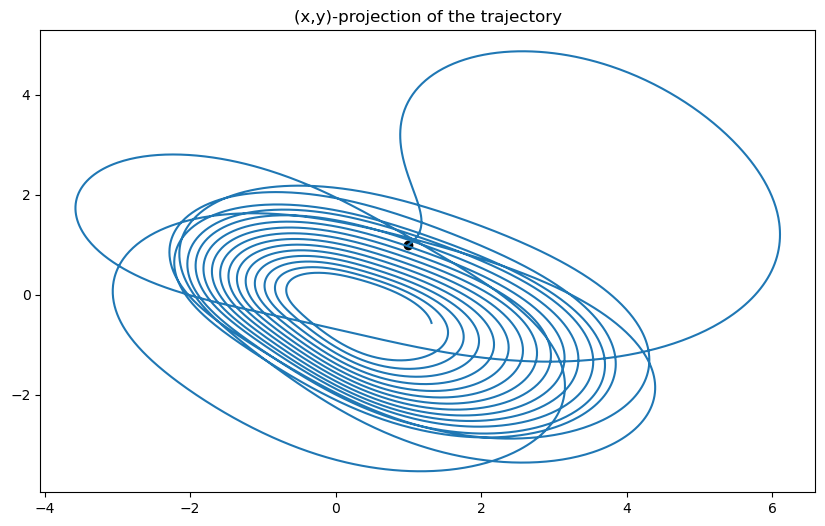 } 
        \caption{Projection of a trajectory of the closed-loop system into the plane $xy$ of the boat on an anticyclone stream. The black dot indicates the initial position of the boat. }
        \label{boat:traj:antic}
\end{figure}

\begin{figure}[htb!]
        \centering
        \includegraphics[width=0.49\linewidth]{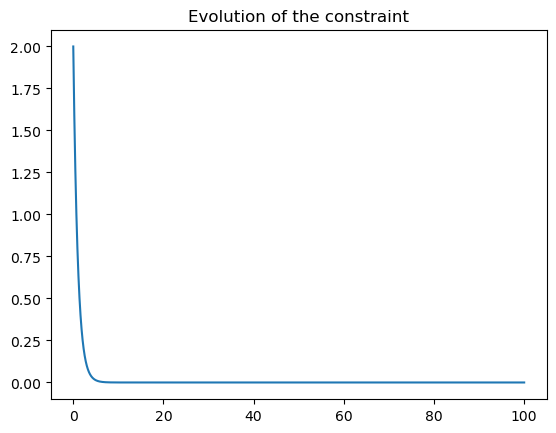} \includegraphics[width=0.49\linewidth]{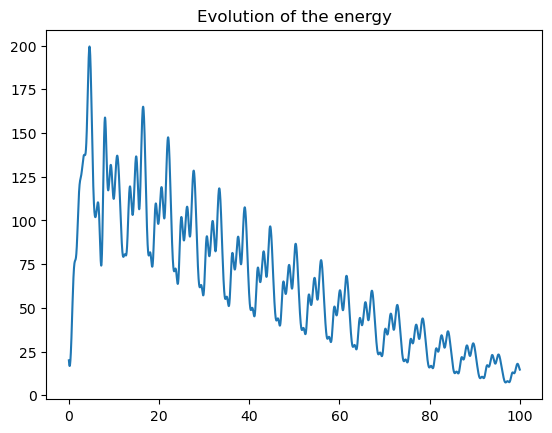}
        \caption{Left: Constraint $\phi(t)$. Right: Energy of the system.}
        \label{boat:const:antic}
\end{figure}

\color{black}

\section{Conclusions and Future Work}\label{conc}
We investigated sufficient conditions and developed a control law driving a mechanical control system to converge exponentially to a desired submanifold of the phase space determined by virtual nonholonomic constraint. Two examples were presented to validate the theoretical results.
 
 Some related questions to the main result of the paper have to be addressed in future work. This first is the development of a control law that stabilizes the constraint distribution for the case where a mechanical system evolves in a Lie group \cite{liegroups} or a homogeneous space \cite{homogeneous}. An interesting question is to understand the qualitative behavior of the closed-loop dynamics with a special emphasis on the energy. We have observed numerically that the energy often stabilizes around a specific value or it converges to a bounded set of values. Understanding the limit value of the energy might give a clue on the nature of the closed-loop dynamics. %In addition, we would like to characterize it geometrically as in \cite{VNNHC}.


\begin{thebibliography}{8}

\bibitem{FoM} Abraham, R., $\&$ Marsden, J. E. (2008). Foundations of mechanics (No. 364). American Mathematical Soc.
\bibitem{virtual}
 A. Anahory Simoes, E. Stratoglou, A. Bloch, L. Colombo (2023). Virtual nonholonomic constraints: A geometric approach. Automatica, 155, 111166.
\bibitem{stabilization} Anahory Simoes, A., Bloch, A., Colombo, L.,$\&$ Stratoglou, E. (2024). Geometric stabilization of virtual linear nonholonomic constraints. arXiv e-prints, arXiv-2411.

\bibitem{bloch2015nonholonomic} Bloch, Anthony M, Nonholonomic Mechanics and Control,  2nd edition Springer, 2015
  

\bibitem{CanudasdeWit:ontheconcept:2004} C. Canudas-de-Wit, 2004. On the concept of virtual constraints as a tool for walking robot control and balancing. Annual Reviews in Control, 28(2), pp.157-166.

\bibitem{Consol:Constal:2015} L. Consolini and A. Costalunga, "Induced connections on virtual holonomic constraints," 2015 54th IEEE Conference on Decision and Control (CDC), Osaka, Japan, 2015, pp. 139-144, 

\bibitem{MdLeon} de León M (2012) A historical review on nonholonomic
mechanics. RACSAM 106:191–224


\bibitem{Sansonetto} F. Fasso, N. Sansonetto. Conservation of energy and momenta in nonholonomic systems with affine constraints. Regular and Chaotic Dynamics, 20(4), 449-462 (2015).

%\bibitem{Sansonetto2}F. Fasso, L. García-Naranjo, N. Sansonetto. Moving energies as first integrals of nonholonomic systems with affine constraints. Nonlinearity, 31(3), 755 (2018).

\bibitem{griffin2015nonholonomic} B. Griffin, J. Grizzle
Nonholonomic virtual constraints for dynamic walking. 54th IEEE CDC Conference 4053--4060 (2015).

\bibitem{hamed2019nonholonomic} K. Hamed, A. Ames. Nonholonomic hybrid zero dynamics for the stabilization of periodic orbits: Application to underactuated robotic walking. IEEE Transactions on Control Systems Technology,
28(6), 2689--2696 (2019).


%\bibitem{horn2020nonholonomic}
%J. Horn, A. Mohammadi, K. Hamed, R. Gregg. Nonholonomic virtual constraint design for variable-incline bipedal robotic walking. IEEE Robotics and Automation Letters, 5(2),
%3691--3698 (2020).

\bibitem{horn2018hybrid} J. Horn, A. Mohammadi, K. Hamed, R. Gregg. Hybrid zero dynamics of bipedal robots under nonholonomic virtual constraints. IEEE Control Systems Letters, 3(2), 386--391 (2018).

%\bibitem{horn2021nonholonomic} J. Horn, R.Gregg. Nonholonomic Virtual Constraints for Control of Powered Prostheses Across Walking Speeds. IEEE Transactions on Control Systems Technology. (2021).
\bibitem{flocking} C. W. Reynolds. Flocks, herds and schools: A distributed behavioral
model. in Proceedings of the 14th Annual Conference on Computer
Graphics and Interactive Techniques, 1987, pp. 25–34.

\bibitem{moran2021energy}
A. Moran-MacDonald, M. Maggiore and X. Wang. From Gymnastics to Virtual Nonholonomic Constraints: Energy Injection, Dissipation, and Regulation for the Acrobot. in IEEE Transactions on Control Systems Technology, vol. 32, no. 1, pp. 47-60, Jan. 2024, 
\bibitem{liegroups} E. Stratoglou, A. Anahory Simoes, A. Bloch and L. Colombo (2023). Virtual Constraints on Lie groups. arXiv preprint arXiv:2312.17531.
\bibitem{homogeneous} E. Stratoglou, A. Anahory Simoes, A. Bloch and L. Colombo (2024). Nonholonomic mechanics and virtual constraints on Riemannian homogeneous spaces. arXiv preprint arXiv:2411.05485.
\bibitem{VNNHC}
E. Stratoglou, A. A. Simoes, A. Bloch and L. J. Colombo, "On the Geometry of Virtual Nonlinear Nonholonomic Constraints," in IEEE Transactions on Automatic Control, doi: 10.1109/TAC.2024.3508543. 
\bibitem{affine} Stratoglou, E., Simoes, A. A., Bloch, A., $\&$ Colombo, L. (2023, August). Virtual Affine Nonholonomic Constraints. In International Conference on Geometric Science of Information (pp. 89-96). Cham: Springer Nature Switzerland.

\bibitem{westervelt2018feedback}
E. Westervelt,J. Grizzle, C. Chevallereau, J. Choi, B. Morris, (2018). Feedback control of dynamic bipedal robot locomotion. CRC press.

\end{thebibliography}
\end{document}